\newtheorem{assumption}{Assumption}
\newtheorem{lem}{Lemma}
\theoremstyle{definition}
\newtheorem{rem}{Remark}
\newtheorem{theo}{Theorem}
\begin{document}
	\title{Low-Latency Cooperative Spectrum Sensing via Truncated Vertical Federated Learning}
	
\IEEEoverridecommandlockouts{
	\author{\IEEEauthorblockN {Zezhong Zhang\IEEEauthorrefmark{1}\IEEEauthorrefmark{2},
			Guangxu Zhu\IEEEauthorrefmark{3},  Shuguang Cui\IEEEauthorrefmark{1}\IEEEauthorrefmark{2}\IEEEauthorrefmark{3}}
		
		\IEEEauthorblockA{
			\IEEEauthorrefmark{1}The Future Network of Intelligence Institute (FNii), Shenzhen, China\\
			\IEEEauthorrefmark{2}The Chinese University of Hong Kong (Shenzhen), Shenzhen, China\\
			\IEEEauthorrefmark{3}Shenzhen Research Institute of Big Data, Shenzhen, China\\
			Contact Email:  zhangzezhong@cuhk.edu.cn}	
		%\thanks{This work is supported in part by the project ``The Verification Platform of Multi-tier Coverage Communication Network for Oceans (PCL2018KP002)", and in part by National Natural Science Foundation of China under Grants 61771232 and 91838301, and in part by the Shenzhen Science and Technology Innovation Committee under Grant JCYJ20160331115457945.}
}}
	\maketitle

\begin{abstract}

In recent years, the exponential increase in the demand of wireless data transmission rises the urgency for accurate spectrum sensing approaches to improve spectrum efficiency. The unreliability of conventional spectrum sensing methods by using measurements from a single \emph{secondary user} (SU) has motivated researches on \emph{cooperative spectrum sensing} (CSS). In this work, we propose a \emph{vertical federated learning} (VFL) framework to exploit the distributed features across multiple SUs without compromising the data privacy. However, the repetitive training process in VFL faces the issue of high communication latency. To accelerate the training process, we propose a \emph{truncated vertical federated learning} (T-VFL) algorithm, where the training latency is highly reduced by integrating the standard VFL algorithm with a channel-aware user scheduling policy. The convergence performance of T-VFL is provided via mathematical analysis and justified by simulation results. Moreover, to guarantee the convergence performance of the T-VFL algorithm, we conclude three design rules on the neural architectures used under the VFL framework, whose effectiveness is proved through simulations.

\end{abstract}

\section{Introduction}
Since the emergence of wireless communication technology, the reuse of radio spectrum resources has remained as an inherent issue. The usage of the spectrum resources is regulated by the \emph{Federal Communications Commission} (FCC), according to whom cognitive radio is defined as an intelligent device that can sense the environment and autonomously adjust its radio operating parameters. A key functionality provided by cognitive radio is \emph{opportunistic spectrum access} (OSA), where non-licensed \emph{secondary users} (SUs) are allowed to utilize the idle licensed spectrum resources to maximize its performance objectives. To this end, accurate and informative spectrum sensing results are crucial. However, conventional spectrum sensing by cognitive radio devices faces the issue of reliability of measurements made by a single user.
The signals emitted from the \emph{primary users} (PUs) are coupled with multi-path fading and shadowing, making the SUs unable to detect the PUs' states from its measurements in a certain moment or location \cite{Hussain2009}.  %Moreover, conventional spectrum sensing only requires binary detection on the PU's activity \cite{Wang2011Survey,Larsson2012}, while recently the objective of spectrum sensing is extended to detecting both the transmit power levels and locations of the PUs. The inclusion of the location information enables more sophisticated resource allocation while also calls for more reliable spectrum sensing approaches. 
%where SUs can exploit the spectrum resources when the PUs are either inactive or far away. 
%The increase on the information to detect also calls for more reliable spectrum sensing approaches. 

To provide more reliable detection on the PUs' states, \emph{cooperative spectrum sensing} (CSS) has been proposed in the existing literature, divided into two categories according to the decision modes, i.e., hard decision and soft decision. In the hard decision mode, each SU makes binary decision based on threshold levels and then sends it to the server for final decision. On the other hand, intermediate features are extracted and uploaded to the server for the soft decision mode, where every local measurement contributes to the final decision. The informative intermediate features make the soft decision mode more accurate and robust compared to the hard decision mode. A straightforward approach under the soft decision mode is to upload the exact \emph{received signal strength} (RSS) to the server for a centralized decision. An optimal linear fusion algorithm was also proposed in \cite{Quan2008}. However, uploading the raw measurements encounters privacy issues, e.g., the server can infer the SUs' locations based on the local RSS information if uploaded. Addressing this issue has motivated researchers to apply federated learning that preserves data privacy to CSS. As originally proposed for distributed learning, the federated learning framework offloads partial data processing procedures to the SUs \cite{Mcmahan17} for privacy protection, which also helps exploit the distributed computation resources at the SUs. In this work, we adopt the \emph{vertical federated learning} (VFL) framework \cite{YQ2019} in view of the fact that at each time slot, the RSSs measured at each SU can be considered as partial features of the PUs' states. In the existing literature, early attempts to apply VFL to the CSS task have been made in \cite{Gao2019,Zhang2020}. The authors in \cite{Gao2019} adopt the hard decision mode with scalar local output. In \cite{Zhang2020}, a VFL design with the soft decision mode is considered and a RNN-based model is used to exploit the consistency of data samples in the time domain to improve the detection accuracy. The convergence performance of the standard VFL algorithm is analyzed in \cite{YQarxiv}.
However, in the above works \cite{Gao2019,Zhang2020,YQarxiv}, the authors consider the case that all SUs participate in each round, where all local models and the central model are trained with the aid of data exchange between the server and the SUs. In this case, high communication latency takes place once if some wireless links experience deep channel fadings. The repetitive communication during the training process makes communication latency the bottleneck \cite{Mcmahan17}. Therefore, the main challenge of deploying the VFL framework in a wireless network is how to achieve fast convergence under limited communication and computation resources. 

In view of the drawbacks of the existing works, in this paper we propose a VFL framework to detect the PUs' states, including their transmit power levels and locations.
To minimize the total training latency, which consists of the communication and computation latency through the whole training process, we propose a \emph{truncated VFL} (T-VFL) algorithm by integrating the standard VFL algorithm with a channel-aware user scheduling policy. The convergence performance of the T-VFL algorithm is provided through both analysis and simulations. Moreover, to cope with the scheduling policy, we also provide some design rules on the neural architectures used under the VFL framework to ensure fast convergence. The key contributions of this work is concluded as follows.

\begin{itemize}
	\item {\bf Low-Latency VFL via User Scheduling:} The standard VFL algorithm	requires updating the whole \emph{deep neural network} (DNN) in each round, which brings in high communication latency when some SUs experience deep channel fadings. To tackle this problem, we propose the T-VFL algorithm. By scheduling SUs with channel gains above a given threshold, the per-round communication latency can be highly reduced. On the other hand, the user scheduling policy has a side effect on slowing down the convergence speed that more training rounds are required to achieve a given accuracy. The effect of the proposed user scheduling policy on the convergence speed is also described mathematically in virtual of learning theory.
	
	\item {\bf Neural Architecture Designs:} Based on the convergence analysis, we draw an upper bound on the total training latency of the T-VFL algorithm. The bound clearly shows the effect of different wireless parameters, while it is only valid when all assumptions hold. Therefore, instead of minimizing the analytical bound, we give some design rules on the neural architectures to enhance the robustness of its convergence performance. We show through simulations that by using the proposed neural architectures, the T-VFL algorithm outperforms the standard VFL algorithm with an arbitrary activation ratio. 
\end{itemize}

	\section{Collaborative Spectrum Sensing Model}\label{SystemModel}
	In this section, we present the model of a distributed wireless communication system, design the CSS algorithm, and describe its implementation in the system.

	\subsection{Distributed Wireless Communication System}\label{communication}
	We consider a cellular network with an arbitrary number of PUs and $K$ SUs. In each time slot, the licensed spectrum is partially or fully occupied by the PUs for communication. To fully exploit the spectrum resources without interfering the communication quality of the PUs, the SUs can reuse the spectrum once it is idle or the active PUs are far away. To this end, the SUs need to first detect the states of the PUs, i.e., their locations and transmit power levels.

	Following the idea of cognitive radio, the states of the PUs can be inferred by exploiting the RSSs at the SUs \cite{Hussain2009,Quan2008}, which are interpreted as implicit features. For the RSSs at each SU, the signal propagation paths are determined by the environment between the PUs and the SU. Hence the RSSs follow distinct distributions at different SUs, indicating that the features observed at different SUs are not fully overlapped with each other. Intuitively, by intelligently combining the features across all SUs, the PUs' states can be detected with higher accuracy than using features from a single SU. In practice, the correlation among features at different SUs is usually unknown, which calls for a general solution to the CSS problem by taking all the features equally as direct input information for decision. By exploiting the advantage of supervised learning techniques which train neural networks using labeled data without any constraint or assumption on the input features. we present a VFL framework and its deployment in the wireless system in the sequel.

	\subsection{Collaborative Spectrum Sensing Algorithm}
	\subsubsection{Machine Learning Model} 
	Suppose the data collection process lasts for $M$ time slots where each SU records the RSSs and its own location at the current time slot as the local features. In this case, the local dataset at each SU $k$ is denoted as $\{\mathbf{x}_i^k \in \mathbb{R}^{d_k\times1}\}_{i=1}^M$, consisting of $M$ feature vectors of dimension $d_k$. The PUs' states are identical to all SUs at a given time slot, while the propagation paths depend on the environment around each SU, making the local features distinct across the SUs.  To improve the accuracy of estimating the state of the PUs, we propose the following VFL framework, where the features from all SUs are concatenated as a complete data sample for inference.
	
	Denote the \emph{complete feature vector} sampled at time slot $i$ as $\mathbf{x}_i = [{\mathbf{x}_i^1}^T, \dots, {\mathbf{x}_i^K}^T]^T$, and suppose the associated label $y_i$ is available at the server. Hence the \emph{complete data sample} $i$ is given as $\mathcal{D}_i \triangleq \{\mathbf{x}_i, y_i \}$, and the global dataset is denoted as $\mathcal{D} = \{\mathcal{D}_i\}_{i=1}^M$. The idea of VFL is to train a DNN which is divided into $K+1$ parts and distributed across the SUs and the server. Each SU owns a local model with parameters denoted as $\boldsymbol{\theta}_k$ and the server owns a central model with parameters denoted as $\boldsymbol{\theta}_0$. All the $K+1$ models constitute the whole DNN, given as $\boldsymbol{\Theta} = [\boldsymbol{\theta}_0, \dots, \boldsymbol{\theta}_K]$. The cooperative training problem can be formulated as 
	\begin{align}
	(\mathbf{P1}) \ \min_{\boldsymbol{\Theta}} \mathcal{L}(\boldsymbol{\Theta}; \mathcal{D}) \!\triangleq\! \frac{1}{M} \!\sum\limits_{i=1}^M\! f(\boldsymbol{\theta}_0, \dots, \boldsymbol{\theta}_K; \mathcal{D}_i) \!+\! \lambda \!\sum\limits_{k=0}^K\! \gamma(\boldsymbol{\theta}_k),\nonumber
	\end{align}
	where $\gamma(\cdot)$ denotes the regularization function and $\lambda$ is the hyperparameter.
    
    Training the whole DNN requires multiple rounds of communication between the server and SUs, where each round is divided into a forward and backward transmission phases. We give an example to illustrate the training process in each round. Without loss of generality, we use \emph{mean squared error} (MSE) as the loss function and the round index $t$ is omitted.
    \begin{itemize}
    	\item {\bf Forward Phase:} For each data sample $i$, each SU feeds its local feature vector $\mathbf{x}_i^k$ into the local model $\boldsymbol{\theta}_k$, which outputs an intermediate result $\mathbf{p}_i^k = \mathcal{C}_1(\boldsymbol{\theta}_k; \mathbf{x}_i)$. Then the intermediate results at all SUs are uploaded to the server as the input to the central model $\boldsymbol{\theta}_0$. The output of the central model is denoted as $\hat{y}_i = \mathcal{C}_2(\boldsymbol{\theta}_0; \mathbf{p}_i^1,\dots, \mathbf{p}_i^K)$, based on which the function loss is expressed as 
    	\begin{align}\label{functionLoss}
    	\mathcal{L}(\boldsymbol{\Theta}; \mathcal{D}) = -\frac{1}{M} \sum\limits_{i=1}^M \|y_i - \hat{y}_i\|^2 + \lambda \sum\limits_{k=0}^K \gamma(\boldsymbol{\theta}_k).
    	\end{align}  
    	\item {\bf Backward Phase:} With the loss value calculated according to \eqref{functionLoss}, the gradient of the central model can be obtained at the server as $\mathbf{g}(\boldsymbol{\theta}_0) = \frac{\partial \mathcal{L}(\boldsymbol{\Theta}; \mathcal{D})}{\partial \boldsymbol{\theta}_0}$. To calculate the local gradient for the model at SU $k$, whose expression is given as $\mathbf{g}(\boldsymbol{\theta}_k) = \frac{\partial \mathcal{L}(\boldsymbol{\Theta}; \mathcal{D})}{\partial \mathbf{p}_i^k} \cdot \frac{\partial \mathbf{p}_i^k}{\partial \boldsymbol{\theta}_k}$, the server sends the information of $\frac{\partial \mathcal{L}(\boldsymbol{\Theta}; \mathcal{D})}{\partial \mathbf{p}_i^k}$ to SU $k$ and $\frac{\partial \mathbf{p}_i^k}{\partial \boldsymbol{\theta}_k}$ can be computed locally. Then all the SUs and the server update their model using gradient descent as
    	\begin{align}
    	\boldsymbol{\theta}_k \leftarrow \boldsymbol{\theta}_k - \eta \mathbf{g}(\boldsymbol{\theta}_k), \quad k = 0,\dots, K,
    	\end{align}
    	with step-size $\eta$.
    \end{itemize}

	\subsection{Transmission Model}\label{transmissionModel}
	As described in the previous subsection, the whole DNN is updated in each round, which requires frequent information exchange via wireless links. Specifically, all devices upload the intermediate results calculated using the latest local models to the server in the uplink, and the server sends the gradient of the input layer in the central model to SUs in the downlink to facilitate the local gradient calculation and model update. We consider the downlink transmission the high SNR case with negligible error and only focus on the uplink phase.
	
	Consider the uplink phase of an arbitrary round. Assume that the local outputs of each SU are of the same dimension $d$, and each output symbol is represented by $q$ bits after quantization. By using BPSK modulation, $D\!=\!qMd$ symbols need to be transmitted at each SU when $M$ data samples are used for training. The channel coefficient from server to SU $k$ in round $t$ is denoted as $H_{k,t} \!=\! \sqrt{\rho_{k,t}}h_{k,t}$, where $\rho_{k,t} \!=\! \phi_{k,t}d_{k,t}^{-\kappa}$ is the large-scale fading coefficient with shadowing factor $ \phi_{k,t}$ and pathloss exponent $\kappa$, and $h_{k,t} \in \mathcal{CN}(0, 1)$ represents the small-scale fading coefficient. Then the $i$-th symbol received at the server in round $t$, denoted as $\mathbf{y}_{k,t}^{(i)} \in \mathbb{C}^{D\times 1}$, is given as
	\begin{align}
	\mathbf{y}_{k,t}^{(i)} = H_{k,t}p_{k,t}\mathbf{s}_{k,t} + \mathbf{z}_{k,t}, \quad 1 \le k \le K, \ t \ge 1,
	\end{align}
	where $\mathbf{s}_{k,t}\in\mathbb{C}^{D\times 1}$ is the symbols transmitted by device $k$, in which each element has unit variance. The complex Gaussian vector $\mathbf{z}_{k,t}\in\mathbb{C}^{D\times 1}$ is the noise vector, with each element of zero mean and variance $\sigma^2$. Let $B$ be the bandwidth allocated to each SU for transmission, the achievable data transmission rate of the link from SU $k$ to the server is given by 
	\begin{align}\label{wirelessModel}
	r_{k,t} = B \log_2 \bigg(1+\frac{p_{k,t}^2 |H_{k,t}|^2}{\sigma^2} \bigg), 
	\end{align}
	where $p_{k,t}$ is the power coefficient. The transmission of each SU is subject to an average power constraint:
	\begin{align}\label{powerConstraint}
	\mathbb{E}_t[p_{k,t}^2] \le P,
	\end{align}
	for a given constant $P$, which can be further simplified as $\mathbb{E}_t[p_{k,t}^2] = P$.
	
	\subsection{User Scheduling Policy} 
	Note that the server needs to collect the intermediate results from all SUs before subsequent computation using the central model. Therefore, the communication latency is determined by the SU with the lowest transmission rate. Under this circumstances, it is power-efficient to align the transmission rate of all SUs since scheduling other SUs with high transmission rates does not accelerates process. To this end, we propose the following user scheduling policy evolved from the so-called \emph{truncated channel inversion} scheme \cite{Broadband} to align the transmission rates and avoid high communication latency due to severe channel fadings. Specifically, the power coefficient in \eqref{wirelessModel} is given as
	\begin{align}\label{truncation}
	p_{k,t} = \left\{ \begin{aligned}
	&\frac{{\sqrt {P_{k,t}^{\text{rx}}} }}{{\sqrt{\rho_{k,t}}h_{k,t}}},& {\left| {h_{k,t}} \right|^2} \geq {G_{k,t}},  \\
	&0,& \ \ {\left| {h_{k,t}} \right|^2} < {G_{k,t}}, \\ 
	\end{aligned}  \right.
	\end{align}
	where $G_{k,t}$ is called the truncation threshold. To fulfill the average power constraint in \eqref{powerConstraint}, it can be obtained that
	\begin{align}
	{P_{k,t}^{\text{rx}}} = \frac{\rho_{k,t} P}{\mathsf{Ei}(G_{k,t})},
	\end{align}
	where $\mathsf{Ei}(X) = \int_{X}^{\infty}\frac{1}{x}\exp(-x)dx$, and $\epsilon_{k,t} = \mathsf{Ei}(G_{k,t})$ represents the activation ratio of SU $k$. Note that ${P_{k,t}^{\text{rx}}}$ determines the received SNR and the transmission rate, we require 
	\begin{align}\label{Alignment}
	\frac{\rho_{1,t}}{\mathsf{Ei}(G_{1,t})} = \frac{\rho_{k,t}}{\mathsf{Ei}(G_{k,t})}
	\end{align} 
	Without loss of generality, we suppose SU $1$ suffers the deepest channel fading. Moreover, the large-scale fading coefficients $\{\rho_{k,t},\ \forall k,t\}$, which keep unchanged over multiple time slots, are usually easy to be obtained at the server a prior. Then we control the activation ratio of the SUs by tuning the truncation threshold $G_{1,t}$ of SU $1$. The truncation thresholds of other SUs can be adjusted according to \eqref{Alignment} by numerical search given that $\mathsf{Ei}(\cdot)$ is a monotonic function.
	
	The SUs with channel gain lower than their truncation thresholds are silenced in the current round. To facilitate subsequent calculation, the server uses the latest intermediate results received from the silenced SUs. According to the above elaboration, the proposed user scheduling policy prevents high communication latency in each round. However, by cutting down the number of active SUs, it also has an effect of slowing down the convergence of the training process, which is mathematically depicted in Theorem \ref{ConvergenceTheo} in the sequel.

	\section{Convergence Analysis}\label{ConvAnalysis}
	To give a tractable analysis on the VFL framework, we first adopt the following two assumptions.
	\begin{assumption}[Lipschitz Smoothness]
		\begin{align}\label{smoothness}
		\nabla\! \mathcal{L}(\boldsymbol{\Theta}_{t+1}) \!-\! \nabla\! \mathcal{L}(\boldsymbol{\Theta}_{t}) \!\le\! L \| \boldsymbol{\Theta}_{t+1} \!- \boldsymbol{\Theta}_{t} \|, \quad  \forall \boldsymbol{\Theta}_{t}, \boldsymbol{\Theta}_{t+1}.
		\end{align}
	\end{assumption}
	\begin{assumption}[Convexity]
		\begin{align}\label{convexity}
		\nabla\! \mathcal{L}(\boldsymbol{\Theta}_{t+1}) \!-\! \nabla\! \mathcal{L}(\boldsymbol{\Theta}_{t}) \!\ge\! \mu \| \boldsymbol{\Theta}_{t+1} \!- \boldsymbol{\Theta}_{t} \|, \quad \forall \boldsymbol{\Theta}_{t}, \boldsymbol{\Theta}_{t+1}.
		\end{align}
	\end{assumption}

	\begin{assumption}[Limited Range]
	\begin{align}\label{limitedRange}
	|\mathcal{L}(\boldsymbol{\Theta}_{t+1}) - \mathcal{L}(\boldsymbol{\Theta}_{t})| \le C , \quad \forall \boldsymbol{\Theta}_{t}, \boldsymbol{\Theta}_{t+1}.
	\end{align}
    \end{assumption}

    \begin{assumption}[Limited Variance]
    By using constant batch size, the gradient variance is bounded by
    	\begin{align}\label{limitedVariance}
    	\text{Var}[\nabla \mathcal{L}(\boldsymbol{\Theta}_{t})] \le c, \quad \forall \boldsymbol{\Theta}_{t}.
    	\end{align}
    \end{assumption}
	
	By exploiting the above assumptions, we have the following theorem on the convergence rate of the VFL algorithm.
	\begin{theo}[Convergence Rate]\label{ConvergenceTheo}
		By scheduling an active set of users $\mathcal{K}_A(t)$ to participate in the learning process in each round, the convergence rate of the VFL algorithm is given as
		\begin{align}\label{ConvRate}
		&\mathbb{E}[\mathcal{L}(\boldsymbol{\Theta}_{t+1}) - \mathcal{L}(\boldsymbol{\Theta}^\ast)] \nonumber\\
		\le &\left[ 1 - \frac{\mu v}{L} \right]^{t+1}  \mathbb{E}\left[ \mathcal{L}(\boldsymbol{\Theta}_0) - \mathcal{L}(\boldsymbol{\Theta}^\ast) \right] + \frac{c}{2\mu v}.
		\end{align} 
		with step-size $\eta \le \frac{1}{L}$. The factor $v$ is a lower bound given as
		\begin{align}\label{decompose}
		v \le v_{0,t} + \sum\limits_{k\in\mathcal{K}_A} v_{k,t}, \quad \forall t,
		\end{align}
		where $v_{k,t} = \frac{\| \nabla_k \mathcal{L} (\boldsymbol{\Theta}_n) \|^2}{\sum\limits_{k\in \mathcal{K}} \|\nabla_k \mathcal{L}(\boldsymbol{\Theta}_n)\|^2}$.
	\end{theo}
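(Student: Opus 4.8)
The plan is to follow the standard argument for stochastic gradient descent on a smooth, strongly convex objective, adapted to the fact that under the scheduling policy only the blocks in $\{0\}\cup\mathcal{K}_A(t)$ are updated in round $t$ while the remaining local models are held fixed. I first invoke the Lipschitz smoothness assumption in its descent-lemma consequence, $\mathcal{L}(\boldsymbol{\Theta}_{t+1}) \le \mathcal{L}(\boldsymbol{\Theta}_{t}) + \langle \nabla\mathcal{L}(\boldsymbol{\Theta}_{t}),\, \boldsymbol{\Theta}_{t+1}-\boldsymbol{\Theta}_{t}\rangle + \tfrac{L}{2}\|\boldsymbol{\Theta}_{t+1}-\boldsymbol{\Theta}_{t}\|^{2}$, and substitute the truncated update $\boldsymbol{\Theta}_{t+1}-\boldsymbol{\Theta}_{t} = -\eta\,\mathbf{g}_{A}(\boldsymbol{\Theta}_{t})$, where $\mathbf{g}_{A}$ denotes the stochastic gradient of $\mathcal{L}$ with the coordinates belonging to the silenced blocks set to zero. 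This substitution is legitimate because a silenced SU keeps the same local model, so the intermediate output that the server reuses coincides with the one it would have produced in the current round; silencing therefore merely nullifies the corresponding gradient blocks rather than injecting stale information.

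I then take the conditional expectation over the mini-batch draw. Unbiasedness gives $\mathbb{E}[\mathbf{g}_{A}(\boldsymbol{\Theta}_t)] = \nabla_{A}\mathcal{L}(\boldsymbol{\Theta}_t)$, the gradient restricted to the active blocks, and the Limited Variance assumption gives $\mathbb{E}\|\mathbf{g}_{A}(\boldsymbol{\Theta}_t)\|^{2} \le \|\nabla_{A}\mathcal{L}(\boldsymbol{\Theta}_t)\|^{2} + c$, since zeroing out coordinates cannot increase the variance. Collecting terms yields $\mathbb{E}[\mathcal{L}(\boldsymbol{\Theta}_{t+1})] \le \mathcal{L}(\boldsymbol{\Theta}_{t}) - \eta\bigl(1-\tfrac{L\eta}{2}\bigr)\|\nabla_{A}\mathcal{L}(\boldsymbol{\Theta}_t)\|^{2} + \tfrac{L\eta^{2}}{2}c$, and the choice $\eta \le 1/L$ lets me replace the coefficient of $\|\nabla_{A}\mathcal{L}\|^{2}$ by the weaker $-\tfrac{\eta}{2}$.

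The decisive step is to pass from the active-block gradient norm to the full gradient norm. By the definition of $v_{k,t}$, the ratio $\|\nabla_{A}\mathcal{L}(\boldsymbol{\Theta}_t)\|^{2}/\|\nabla\mathcal{L}(\boldsymbol{\Theta}_t)\|^{2}$ equals $v_{0,t}+\sum_{k\in\mathcal{K}_A(t)}v_{k,t}$, and $v$ is exactly a lower bound on this quantity that is required to hold for every round and every realization of the random active set, so that $\|\nabla_{A}\mathcal{L}(\boldsymbol{\Theta}_t)\|^{2} \ge v\,\|\nabla\mathcal{L}(\boldsymbol{\Theta}_t)\|^{2}$ surely. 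Applying the convexity assumption through its standard consequence $\|\nabla\mathcal{L}(\boldsymbol{\Theta}_t)\|^{2} \ge 2\mu\bigl(\mathcal{L}(\boldsymbol{\Theta}_t)-\mathcal{L}(\boldsymbol{\Theta}^{\ast})\bigr)$ implied by $\mu$-strong convexity, and setting $\eta = 1/L$, gives the one-step contraction $\mathbb{E}[\mathcal{L}(\boldsymbol{\Theta}_{t+1})-\mathcal{L}(\boldsymbol{\Theta}^{\ast})] \le \bigl(1-\tfrac{\mu v}{L}\bigr)\bigl(\mathcal{L}(\boldsymbol{\Theta}_t)-\mathcal{L}(\boldsymbol{\Theta}^{\ast})\bigr) + \tfrac{c}{2L}$.

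Finally I unroll this recursion from $0$ to $t$, take total expectation, and sum the geometric series $\tfrac{c}{2L}\sum_{j=0}^{t}\bigl(1-\tfrac{\mu v}{L}\bigr)^{j} \le \tfrac{c}{2L}\cdot\tfrac{L}{\mu v} = \tfrac{c}{2\mu v}$, which is precisely \eqref{ConvRate}. I expect the main obstacle to be not any individual inequality but the legitimacy of the constant $v$: one must ensure that a single $v>0$ can simultaneously lower-bound $v_{0,t}+\sum_{k\in\mathcal{K}_A(t)}v_{k,t}$ over all rounds, i.e., that the scheduled blocks never forfeit more than a fixed fraction of the total squared-gradient mass. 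This is exactly where the neural-architecture design rules enter, since a pathological architecture could concentrate the gradient energy in a block that is perpetually silenced, driving $v$ — and hence the contraction factor $1-\mu v/L$ — toward zero. The Limited Range assumption, not used in the argument above, would serve a refinement that tracks the staleness of silenced intermediate outputs when they are not exactly reproduced.
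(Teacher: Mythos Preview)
Your proposal is correct and follows essentially the same route as the paper's Appendix~A: descent lemma from $L$-smoothness, substitution of the block-restricted stochastic update, the variance bound to control the additive noise term, the ratio $v_t=\|\nabla_A\mathcal{L}\|^2/\|\nabla\mathcal{L}\|^2$ lower-bounded by $v$, the PL inequality from $\mu$-strong convexity, and a geometric unrolling with $\eta\le 1/L$. The only cosmetic difference is that you simplify $\eta(1-\tfrac{L\eta}{2})\ge \tfrac{\eta}{2}$ before applying the PL step whereas the paper carries the full factor through and specializes at the end; your explicit justification that a silenced SU's stale intermediate output coincides with its current one (so truncation is exactly block-coordinate nullification, not stale-gradient injection) is a point the paper leaves implicit.
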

    \begin{proof}
    	Please refer to Appendix A.
    \end{proof}
	
	Theorem 1 shows the convergence properties of the training process. The coefficient $1 - \frac{\mu v}{L}$ describes the speed of the reduction on $\mathbb{E}[\mathcal{L}(\boldsymbol{\Theta}_{t}) - \mathcal{L}(\boldsymbol{\Theta}^\ast)]$, the gap between the current expected loss and the optimal expected loss.  The bound $v$ explicitly depicts the effect of user scheduling on the convergence rate, and thereby called as the \emph{effective activation level}. Note that $\mu \le L$, and $0 \le v \le 1$, which makes $0 \le 1-\frac{\mu v}{L} \le 1$. A larger $v$, which corresponds to a higher activation ratio of the SUs, reduces the coefficient $1 - \frac{\mu v}{L}$ and pushes it from $1$ to $1-\frac{\mu}{L}$, resulting in an increase on the convergence rate of the VFL algorithm.

	\section{Neural Architecture Designs for Fast Convergence}\label{architectureDesign}
	In this section, we first derive the per-round latency and the number of rounds required to achieve a given accuracy, based on which the expression of the total training time is also obtained. Then we propose neural architecture designs, which are able to accelerate the VFL algorithm when combined with the truncated channel inversion scheme. The effectiveness of the designs are verified through simulations in the next section.
	  
	{\bf Per-round Latency:} By adopting the truncated channel inversion scheme in Sec. \ref{transmissionModel}, the per-round communication latency during the VFL training process is lower bounded by
	\begin{align}
	T_\mathsf{comm} \le \frac{qMd}{B \log_2(1+\frac{P}{\sigma^2}\frac{\rho_{1,t}}{{\mathsf{Ei}(G_{1,t})}})},
	\end{align}
	which is the communication latency of transmission from SU $1$ to the server. The computation time at the SUs, denoted by $T_\mathsf{comp}$, is proportional to the size of local model and thereby considered to be constant in each round. The total latency of each communication round is then given as 
	\begin{align}
	T_\mathsf{total} = T_\mathsf{comm}+T_\mathsf{comp}.
	\end{align}

	{\bf Number of Rounds:} 
	According to Theorem \ref{ConvergenceTheo}, we have the following lemma on the number of rounds required to achieve a given accuracy.
	\begin{lem}
		To achieve $\varepsilon$-accuracy, defined as $\mathbb{E}[\mathcal{L}(\boldsymbol{\Theta}_{t}) - \mathcal{L}(\boldsymbol{\Theta}^\ast)] \le \varepsilon$, the expected number of rounds required for the training process is upper-bounded by
		\begin{align}
		N_{\mathsf{expect}} \le \frac{\ln \left[\frac{\varepsilon}{C} - \frac{c}{2\mu vC} \right] }{\ln \left[1-\frac{\mu v}{L}  \right] }
		\end{align}		
	\end{lem}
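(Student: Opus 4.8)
The plan is to convert the steady-state bound of Theorem~\ref{ConvergenceTheo} into a time-to-accuracy statement by imposing a sufficient condition for $\varepsilon$-accuracy directly on its right-hand side. Writing that bound at round $t$,
\begin{align}
\mathbb{E}[\mathcal{L}(\boldsymbol{\Theta}_{t}) - \mathcal{L}(\boldsymbol{\Theta}^\ast)] \le \Big[1 - \tfrac{\mu v}{L}\Big]^{t}\, \mathbb{E}[\mathcal{L}(\boldsymbol{\Theta}_0) - \mathcal{L}(\boldsymbol{\Theta}^\ast)] + \frac{c}{2\mu v},\nonumber
\end{align}
it suffices to find the $t$ for which the right-hand side is at most $\varepsilon$.

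First I would upper-bound the initial optimality gap $\mathbb{E}[\mathcal{L}(\boldsymbol{\Theta}_0) - \mathcal{L}(\boldsymbol{\Theta}^\ast)]$ by the constant $C$ from the Limited Range assumption \eqref{limitedRange}, which gives the clean sufficient condition $[1 - \mu v/L]^{t}\,C + c/(2\mu v) \le \varepsilon$. Rearranging to isolate the geometric factor yields $[1 - \mu v/L]^{t} \le \varepsilon/C - c/(2\mu v C)$.

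Next I would take natural logarithms of both sides. Since $\mu \le L$ and $0 \le v \le 1$ (as noted immediately after Theorem~\ref{ConvergenceTheo}), the base satisfies $1 - \mu v/L \in [0,1)$, hence $\ln[1 - \mu v/L] < 0$; dividing through by this negative quantity reverses the inequality and produces $t \ge \ln[\varepsilon/C - c/(2\mu v C)] / \ln[1 - \mu v/L]$. Therefore any round count at least this large guarantees $\mathbb{E}[\mathcal{L}(\boldsymbol{\Theta}_{t}) - \mathcal{L}(\boldsymbol{\Theta}^\ast)] \le \varepsilon$, so the expected number of rounds needed is upper-bounded by the right-hand side (up to rounding up to the next integer), which is exactly the claimed $N_{\mathsf{expect}}$.

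The step needing the most care is the \emph{regime of validity}, i.e.\ checking that the argument of the numerator logarithm, $\varepsilon/C - c/(2\mu v C)$, is strictly positive and no larger than $1$. Positivity is equivalent to $\varepsilon > c/(2\mu v)$, i.e.\ the target accuracy must exceed the asymptotic noise floor of Theorem~\ref{ConvergenceTheo}; below that floor the bound can never be driven to $\varepsilon$ and no finite $N_{\mathsf{expect}}$ exists. The upper bound by $1$ amounts to $\varepsilon - c/(2\mu v) \le C$, which is consistent with $C$ bounding the range of $\mathcal{L}$ and hence the initial gap. I would state both as the standing hypotheses of the lemma; the remaining manipulations are the elementary algebra above.
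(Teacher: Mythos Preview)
Your proposal is correct and is exactly the direct derivation from \eqref{ConvRate} that the paper intends: bound the initial gap by $C$ via \eqref{limitedRange}, isolate the geometric factor, and take logarithms with the appropriate sign flip. In fact you supply considerably more detail than the paper's one-line proof, including the regime-of-validity check $\varepsilon > c/(2\mu v)$ that the paper leaves implicit.
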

    \begin{proof}
    	The result can be directly derived from \eqref{ConvRate}.
    \end{proof}
    
    According to the above derivation, the total training time can be upper bounded by
    \begin{align}\label{TotalBound}
    T_{\mathsf{expect}} \le& N_{\mathsf{expect}} T_
    \mathsf{round} \nonumber\\
    =& \!\left[\!\frac{qMd}{B\! \log_2\!\left(\!1\!+\!\frac{P}{\sigma^2}\frac{\rho_{1,t}}{{\mathsf{Ei}(G_{1,t})}}\!\right) }\!+\! T_\mathsf{comp} \!\right]\!\! \times\! \frac{\ln \!\left[\!\frac{\varepsilon}{C} \!-\! \frac{c}{2\mu vC} \!\right] }{\ln \!\left[1\!-\!\frac{\mu v}{L}  \right] }.
    \end{align}
    This upper bound clearly describes the effects of wireless parameters, e.g., the truncation threshold $G_t$, the effective activation level $v$ and the accuracy $\varepsilon$, on the total training time of the VFL algorithm. Moreover, it provides a feasible approach to reduce the training time by minimizing the upper bound in \eqref{TotalBound}. However, direct optimization on the theoretical bound is not an appropriate solution due to the following several reasons. Firstly, the effects of some parameters on the total training time are complicated. For example, it can be observed that the truncation threshold $G_t$ plays a key role. It not only scales the SNR with a factor $\frac{1}{{\mathsf{Ei}(G_{1,t})}}$, but also has an effect on the effective activation level $v$ which is difficult to characterize, making the optimization problem intractable. Secondly, the theoretical bound is valid only when the assumptions in Sec. \ref{ConvAnalysis} strictly hold, which may not be exactly the case in practice. Moreover, it is difficult to obtain the knowledge of hyperparameters $\mu$ and $L$ a priori. In view of the above facts, we propose a neural architecture design to accelerate the training process for general cases as follows.

    {\bf Design Rule 1:} The dimension of the local output features $d$ determines the communication overhead in each round, which thereby should be chosen as small as possible. To be specific, $d$ can be set slightly larger than the number of features necessary for subsequent decision at the server, which can be estimated empirically according to the type of the task and data samples. 
    
    {\bf Design Rule 2:} The weight of $v_{0,t}$ in the effective activation level $v$ should be set as the largest value in its feasible range. Since $v_{0,t} = \frac{\| \nabla_0 \mathcal{L} (\boldsymbol{\Theta}_n) \|^2}{\sum\limits_{k\in \mathcal{K}} \|\nabla_k \mathcal{L}(\boldsymbol{\Theta}_n)\|^2}$ depends on the gradient power of the central model at the server, which cannot be controlled during the training process, we increase the weight of $v_{0,t}$ in an intuitive way by using neural networks with larger size at the server. The feasible range of the central model size is further discussed in Remark \ref{SizeRange}.
    
    {\bf Design Rule 3:} Given that $v_{0,t}$ is dominant in $v$, an increase on the truncation threshold $G_t$, or a reduction on the activation ratio, has a positive effect on shortening the total training latency unless it comes to the extreme case where the probability that all SUs are truncated tends to $1$.

    The principles behind the design rules 2 and 3 are further elaborated as follows.  To accelerate the training process by the means of user scheduling, a higher truncation threshold $G_t$ leads to lower communication latency and a smaller effective activation level $v$ which requires more training rounds to achieve the $\varepsilon$-accuracy. This makes the choice on $G_{1,t}$ very crucial to controlling the training time. However, as we can observe from \eqref{decompose}, the presense of $v_{0,t}$ is independent of user scheduling, which corresponds to the property of the VFL framework that the central model is always updated in each round. Therefore, a larger $v_{0,t}$ can alleviates the effect of user scheduling on the effective activation level. Specifically, when the weight of $v_{0,t}$ is dominant in the RHS of \eqref{decompose}, it only requires a few additional rounds to achieve the $\varepsilon$ accuracy if we increase the truncation threshold $G_{1,t}$. In this sense, it is favorable to keep $v_{0,t}$ dominant during the training process so that we can tune the truncation threshold $G_{1,t}$ within a large range to control the communication latency without affecting the number of rounds much.
    
    %Following the above design rules, we build up neural architectures suitable for CSS tasks using VFL algorithms. The detailed structure and performance of the proposed neural architecture design are illustrated in the subsequent simulation part.

    \begin{rem}[Feasible Range of the Central Model Size]\label{SizeRange}
    	When increasing the size of central model at the server according to the Design Point 2, it usually requires smaller local models to prevent over-fitting \cite{Overfitting}. However, a too small local model can lose too much information of the input feature and distort the final decision at the server. Therefore, we should only manipulate the size of the central model within a feasible range to achieve fast convergence without distorting the learning performance. When there are some empirical neural architectures with similar learning capacity, the Design Rule 2 indicates that choosing the one with the largest central model size offers the least number of training rounds to converge under the same user scheduling policy.
    \end{rem}

	\section{Simulations}\label{simulation}
	To check the performance of the VFL algorithms on spectrum sensing tasks and validate the effectiveness of the proposed neural architecture design, we consider a typical wireless network with $400$m $\times$ $400$m area, where $4$ SUs keep detecting the states of $2$ PUs, as illustrated in Fig. \ref{simScenario}. Each PU has $3$ optional transmit power levels $\{1, 2, 3\}$. The pathloss exponent is $4$ and a $3$dB shadowing effect is considered. Without loss of generality, we assume SU 1 experiences the deepest large-scale fading and the lowest activation ratio. The SUs collect the RSS and their own locations as local feature vectors across multiple time slots, where the channel states and the locations of all SUs and PUs keep to be constant within $1$ time slot. Each time slot is further divided into $200$ mini-slots, where the RSS in each mini-slot is recorded as a feature at each SU. For each SU, a local feature vector is of $203$ dimensions which consists of the SU's own 3D location and $200$ RSSs within $1$ time slot. The label of each data sample has $8$ dimensions, consisting of the transmit power levels and the 3D locations of the 2 PUs. The distribution of the users in the $X-Y$ plane are illustrated in Fig. \ref{simScenario}, where each user keeps moving within a rectangular region. The height of each location is a function of its coordinate in the $X-Y$ plane as $z = 10\cdot[\sin(\frac{x}{100}) + \cos(\frac{y}{100})]$. We generate $60000$ data samples, divided into a training set with $50000$ samples and a testing set with $10000$ samples. \emph{Fully-connected networks} (FCNs) are used for models at each SU and the server, and the models at all SUs are identical. All tasks run on an Nvidia 1070Ti GPU, and we assume the computation capability of each SU is $\frac{1}{4}$ of the server's computation capability. %In the following simulations, we focus on demonstrating two points: 1) a lower activation ratio requires more number of rounds to achieve a given accuracy as described in Theorem \ref{ConvergenceTheo}; and 2) combining the proposed user scheduling policy and neural architecture designs can accelerated the VFL algorithm effectively. 
	Since deploying the centralized learning algorithm in a distributed scenario incurs infinite communication latency, we using VFL algorithm with a large activation ratio $\epsilon_{1,t} = 0.9$ as an approximation of the centralized learning algorithm, which is also used as the baseline for comparison. 
	
	\begin{figure}[t]
		\centering
		\includegraphics[height=4cm]{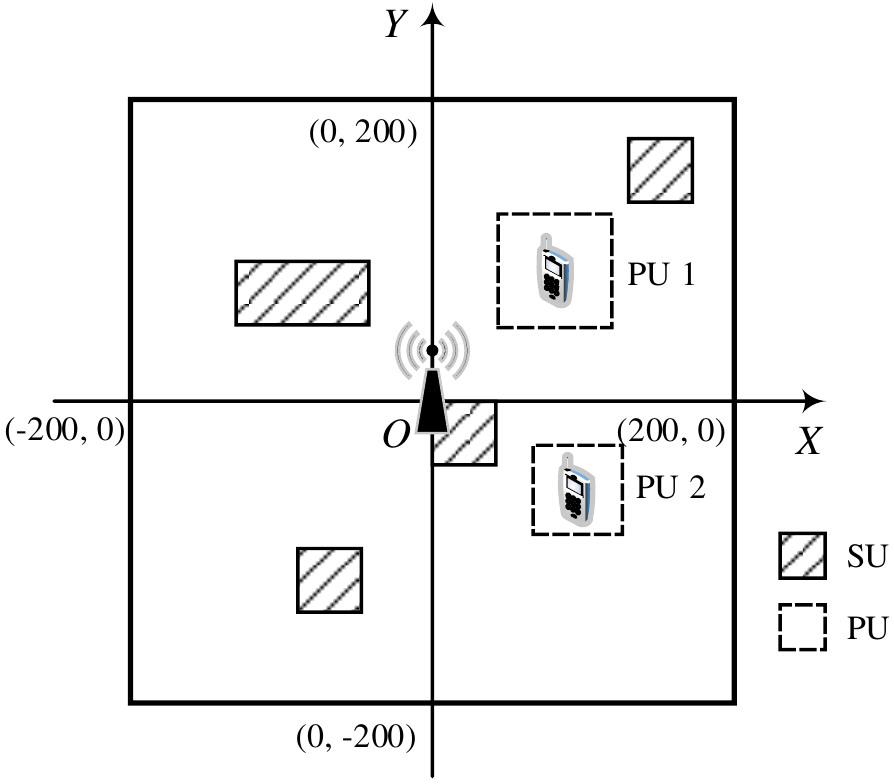}
		\caption{Illustration of the cooperative spectrum sensing scenario.}\label{simScenario}\vspace{-0.5cm}
	\end{figure}

	Firstly, we show the convergence performance of both the VFL algorithm and the centralized learning algorithm in Fig. \ref{NetworkI}. We first use a neural architecture denoted as Network I, where a $3$-layer FCN of size $(203, 2048, 8)$ is used at each SU and a $3$-layer FCN of size $(32, 24, 8)$ is deployed at the server. The step-size is $\eta=10^{-4}$. The activation ratio of SU $1$ is set as $\epsilon_{1,t} = \alpha$ with $\alpha = \{0.9, 0.8, 0.6, 0.4, 0.2, 0.1\}$. It can be observed from Fig. \ref{NetworkI}(a) that when the activation ratio reduces, it requires more number of rounds to achieve a given target accuracy, which coincides with the result in Theorem \ref{ConvergenceTheo}. The centralized learning algorithm, which is marked in blue, gives the best performance with the fastest descent. On the other hand, in Fig. \ref{NetworkI}(b), we show the test accuracy versus the training latency. We can observe that as the activation ratio reduces from $1$ to $0$, the training latency first reduces due to the reduction on the communication latency, then it turns to increase when the activation ratio is close to $0$, because truncating most of the SUs highly increases the number of round required to converge. Note that an extreme activation ratio, e.g., $\epsilon_{1,t} = 0.1$, gives slower convergence compared with the baseline, which indicates that reducing the activation ratio is not always beneficial for the training latency reduction under the neural architecture Network I.

	Following the design rules in Sec. \ref{architectureDesign}, we further propose another neural architecture Network II, where a $3$-layer FCN of size $(203, 32, 8)$ is used at each SU and a $3$-layer FCN of size $(32, 512, 8)$ is deployed at the server. The convergence performance of Network II is shown in Fig. \ref{NetworkII}. Compared with Network I, Network II features a larger central model and smaller local models, which shifts more computation load from the SUs to the server. The architecture of Network II follows the Design Rule 2 in Sec. \ref{architectureDesign}, where a large central model makes $v_{0,t}$ dominant in $v$. Such a design leads to a favorable property that the number of training rounds required to achieve a given testing accuracy is not sensitive to the change in activation ratio. In this case, reducing the activation ratio directly reduces the communication latency without much effect on the number of required training rounds. Therefore, according to \eqref{TotalBound}, by using the proposed architecture Network II, choosing the activation ratio within a large range, say $(0.1, 0.8)$, always gives a better performance compared to the baseline, which is not the case for Network I.  Moreover, in Fig. \ref{NetworkII}, different activation ratios give similar convergence rates, indicating that an arbitrary activation ratio gives a near-optimal solution. Therefore, the proposed neural architecture Network II is more robust than Network I under the practical situation where the optimal activation ratio is unknown a prior.
	
	\begin{figure}[t]
		\centering
		\subfigure[MSE versus the number of rounds for  VFL algorithms using Network I.]{\includegraphics[width=4.3cm]{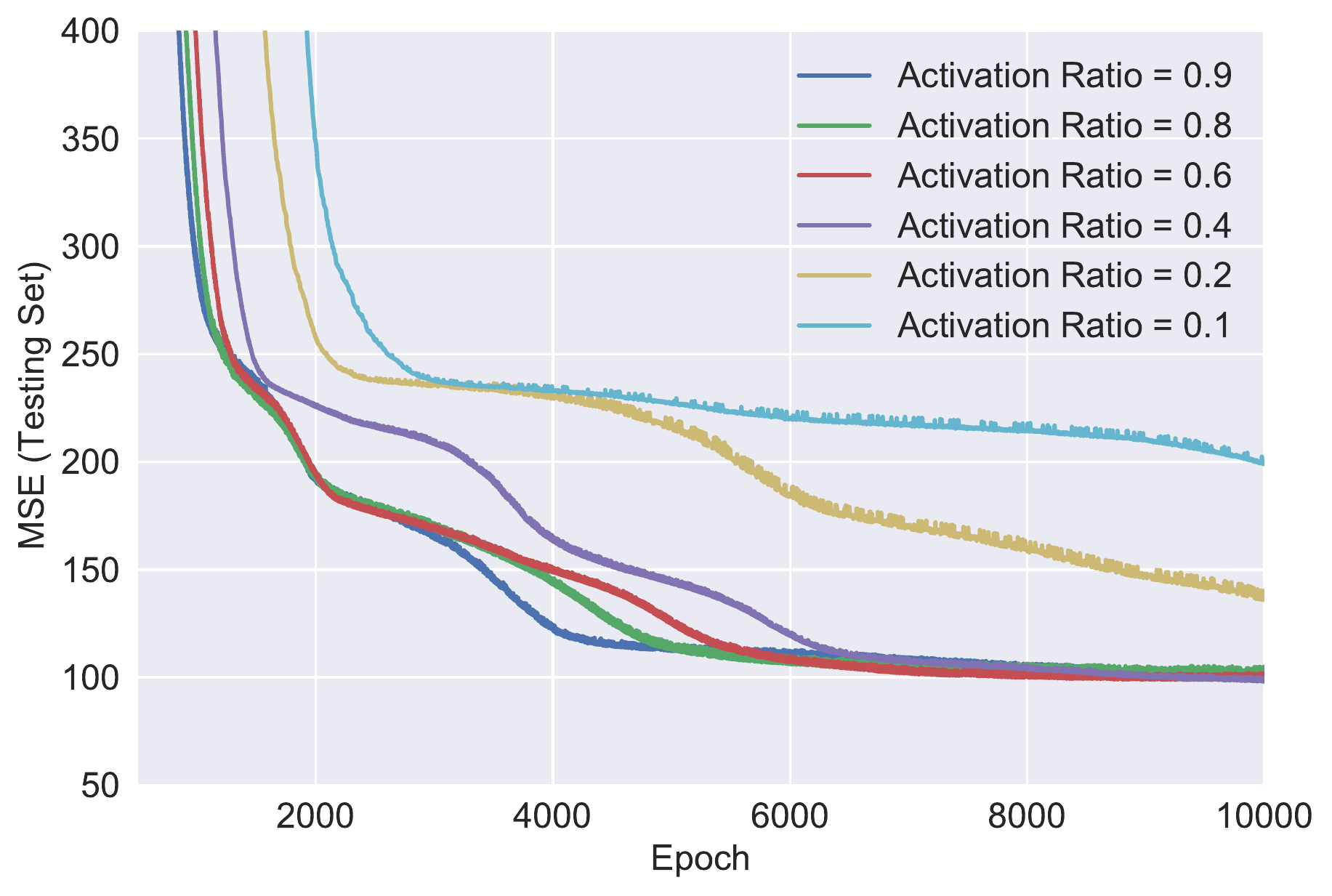}}
		\subfigure[MSE versus training latency for VFL algorithms using Network I.]{\includegraphics[width=4.3cm]{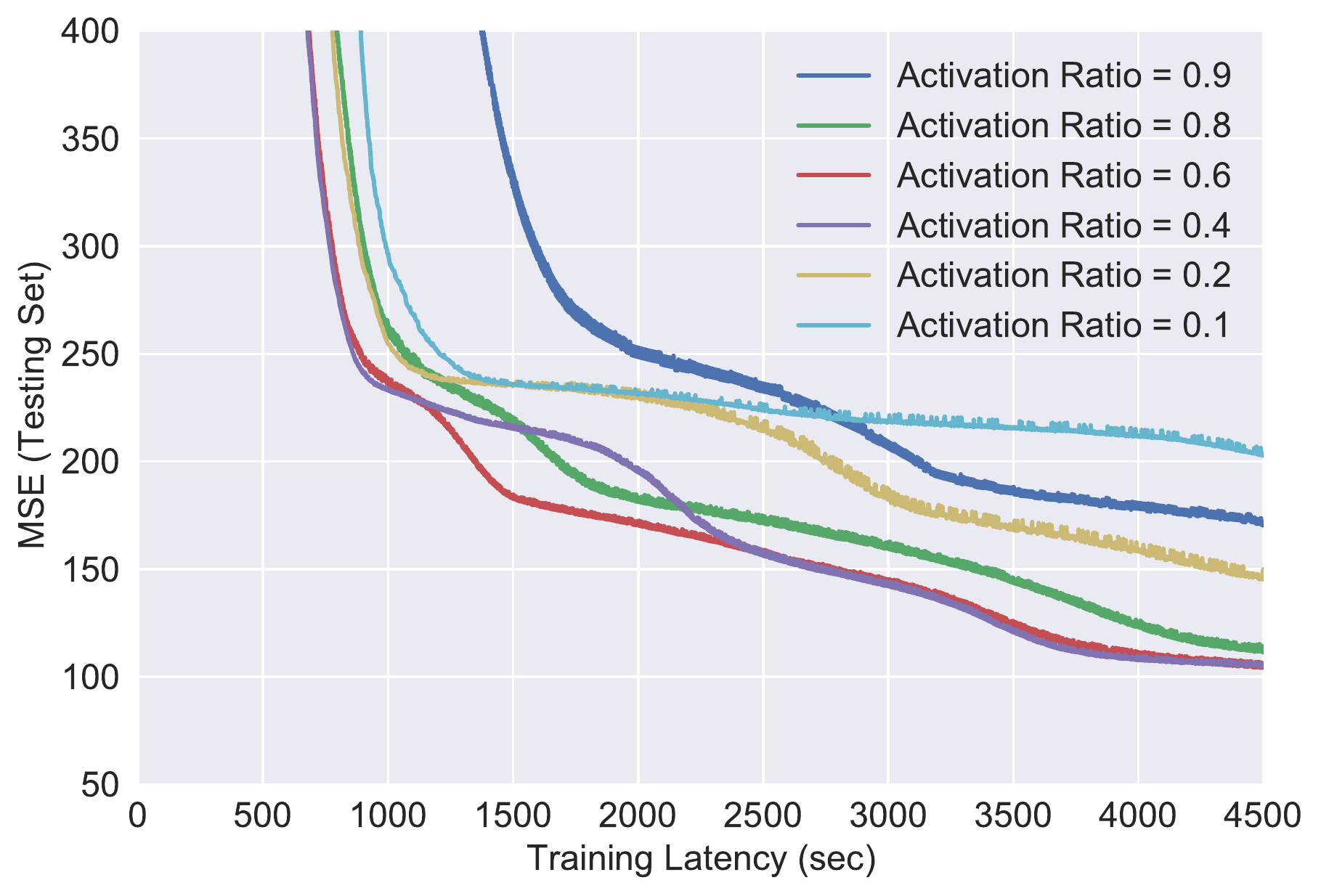}}
		\caption{Convergence performance of VFL algorithms with different activation ratios by using Network I.}\label{NetworkI}\vspace{-0.5cm}
	\end{figure}
	
	\begin{figure}[t]
		\centering
		\subfigure[MSE versus the number of rounds for  VFL algorithms using Network II.]{\includegraphics[width=4.3cm]{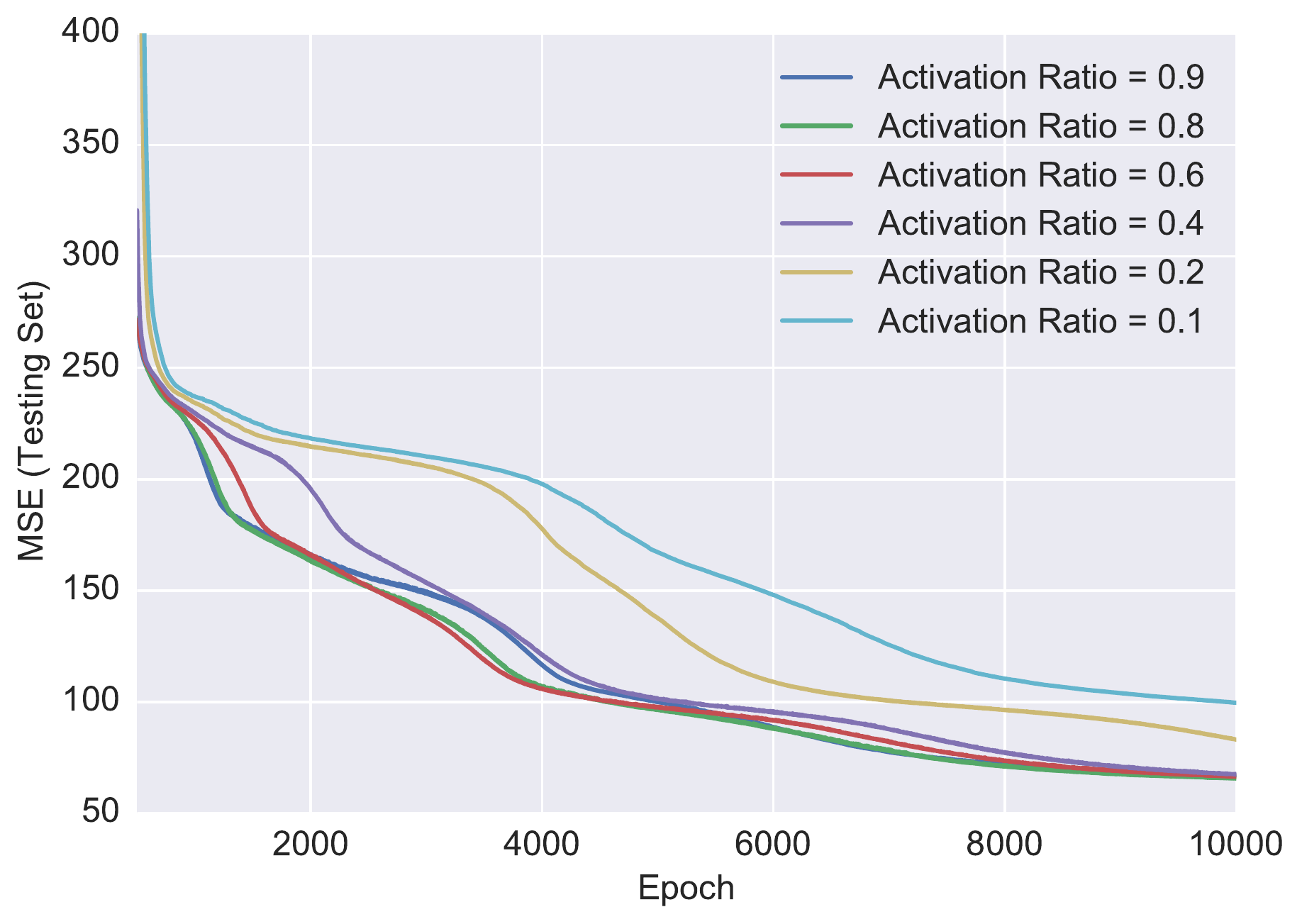}}
		\subfigure[MSE versus training latency for VFL algorithms using Network II.]{\includegraphics[width=4.3cm]{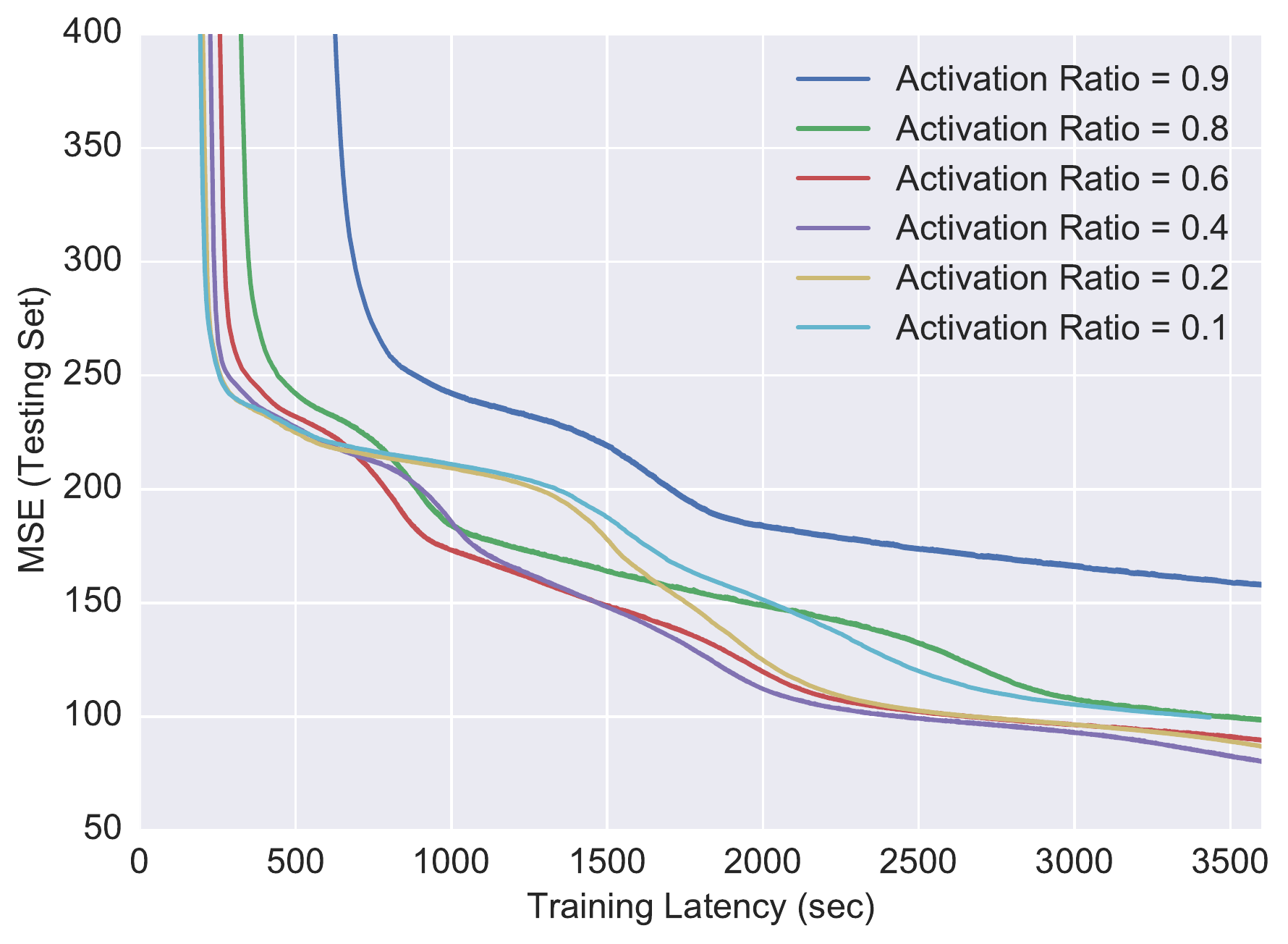}}
		\caption{Convergence performance of VFL algorithms with different activation ratios by using Network II .}\label{NetworkII}\vspace{-0.5cm}
	\end{figure}

    In the next, we show the detection accuracy of the proposed VFL algorithm on both PUs' transmit power levels and their locations. %The output at the server includes the estimated 3D locations of the 2 PUs and their transmit power levels. 
    As shown in Fig. \ref{NetworkII}(a), the MSE is around $60$ after $10000$ training rounds by using Network II, which mainly comes from the location detection. To show the detection error clearly, we compare the estimated PUs' state and the ground-truth labels using $20$ data samples from the testing set. The detection error of the transmit power levels and the 3D locations are presented in  Fig. \ref{ResolutionStrength1} and Fig. \ref{LocationDeviation}, respectively. It can be observed that the estimation error on the transmit power levels is quite small. Moreover, the results in Fig. \ref{LocationDeviation} indicate that an increase in the number of SUs can effectively suppress the detection error, which corresponds to the \emph{fusion gain} under the VFL framework as more features are observed.
    
    \begin{figure}[t]
    	\centering
    	\subfigure[Comparison between the estimated transmit power level of PU 1 and the labels.]{\includegraphics[height=3cm, width=4.3cm]{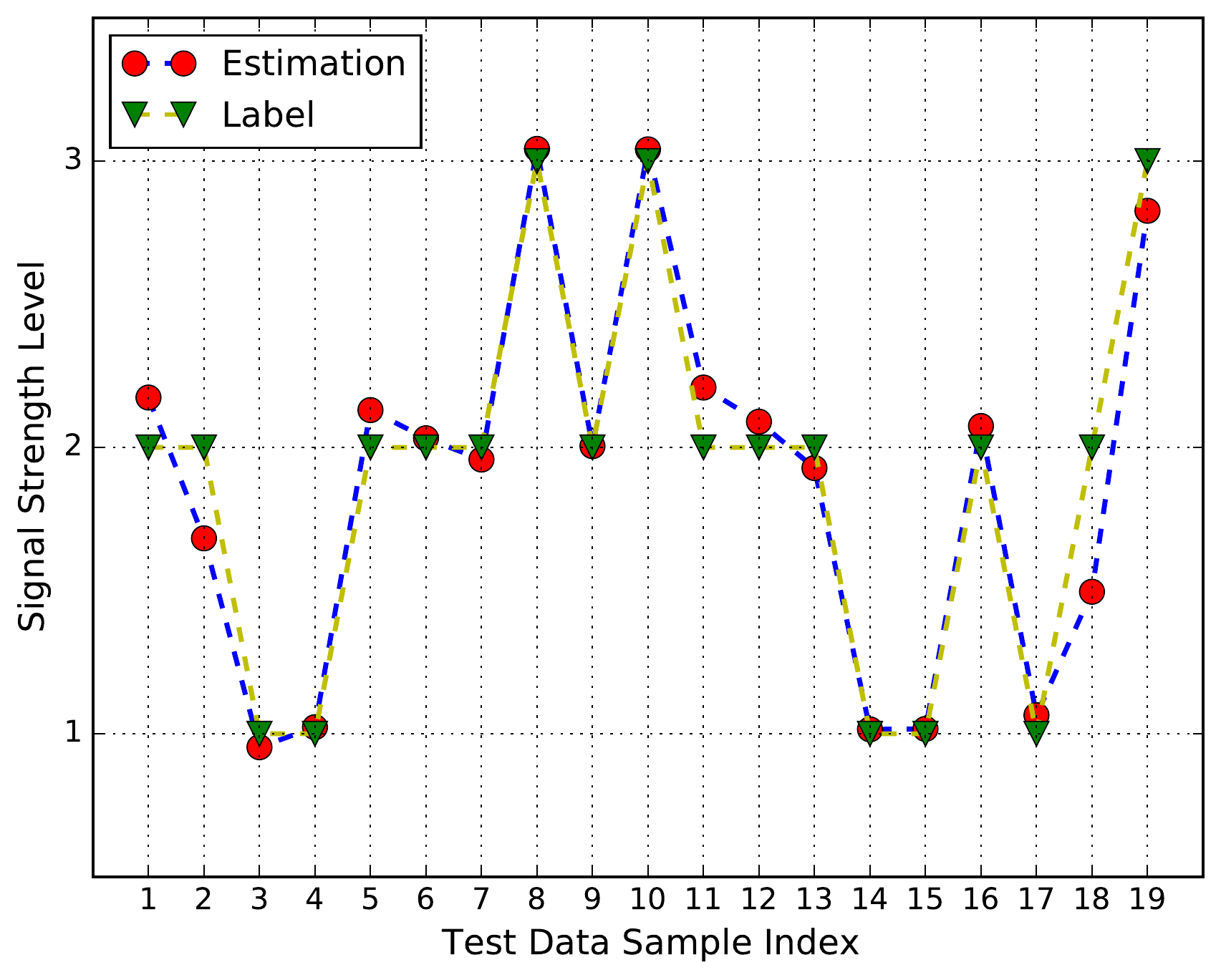}}
    	\subfigure[Comparison between the estimated transmit power level of PU 2 and the labels.]{\includegraphics[height=3cm, width=4.3cm]{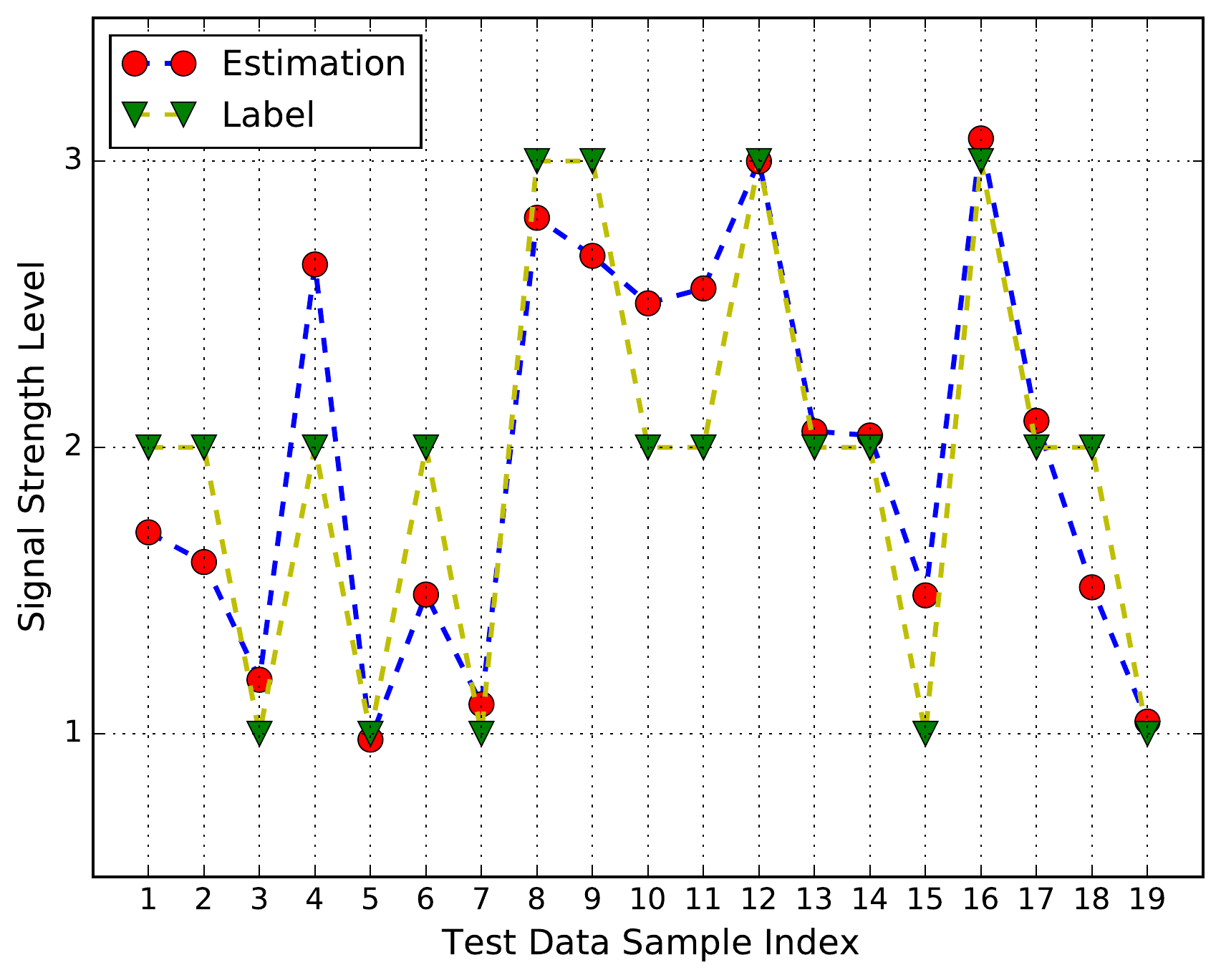}}
    	\caption{Accuracy of the detection on the transmit power level of PU 1.}\label{ResolutionStrength1}\vspace{-0.5cm}
    \end{figure}

    \begin{figure}[t]
    	\centering
    	\subfigure[Location detection error with observations from 4 SUs.]{\includegraphics[height=3cm, width=4.3cm]{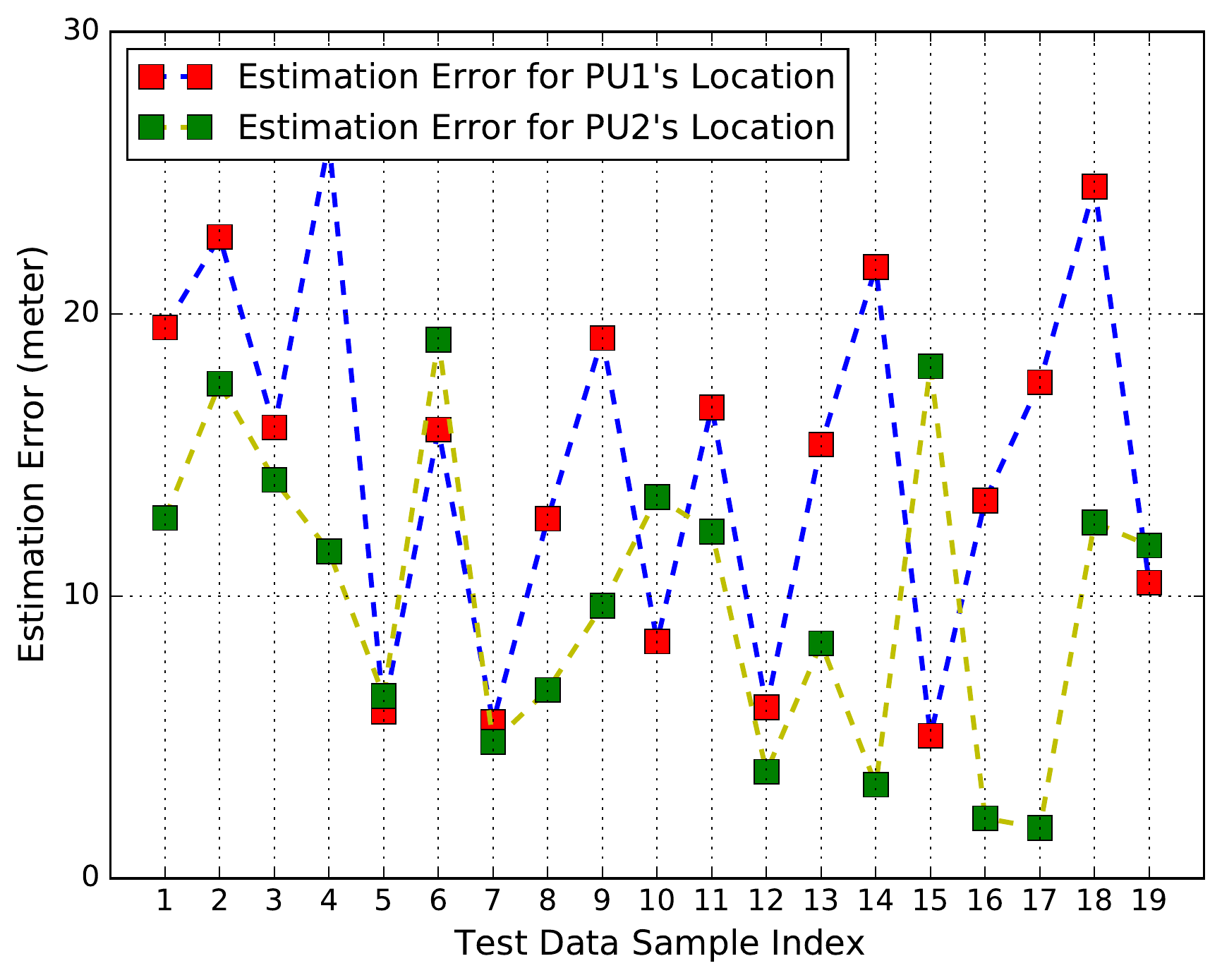}}
    	\subfigure[Location detection error with observations from 8 SUs.]{\includegraphics[height=3cm, width=4.3cm]{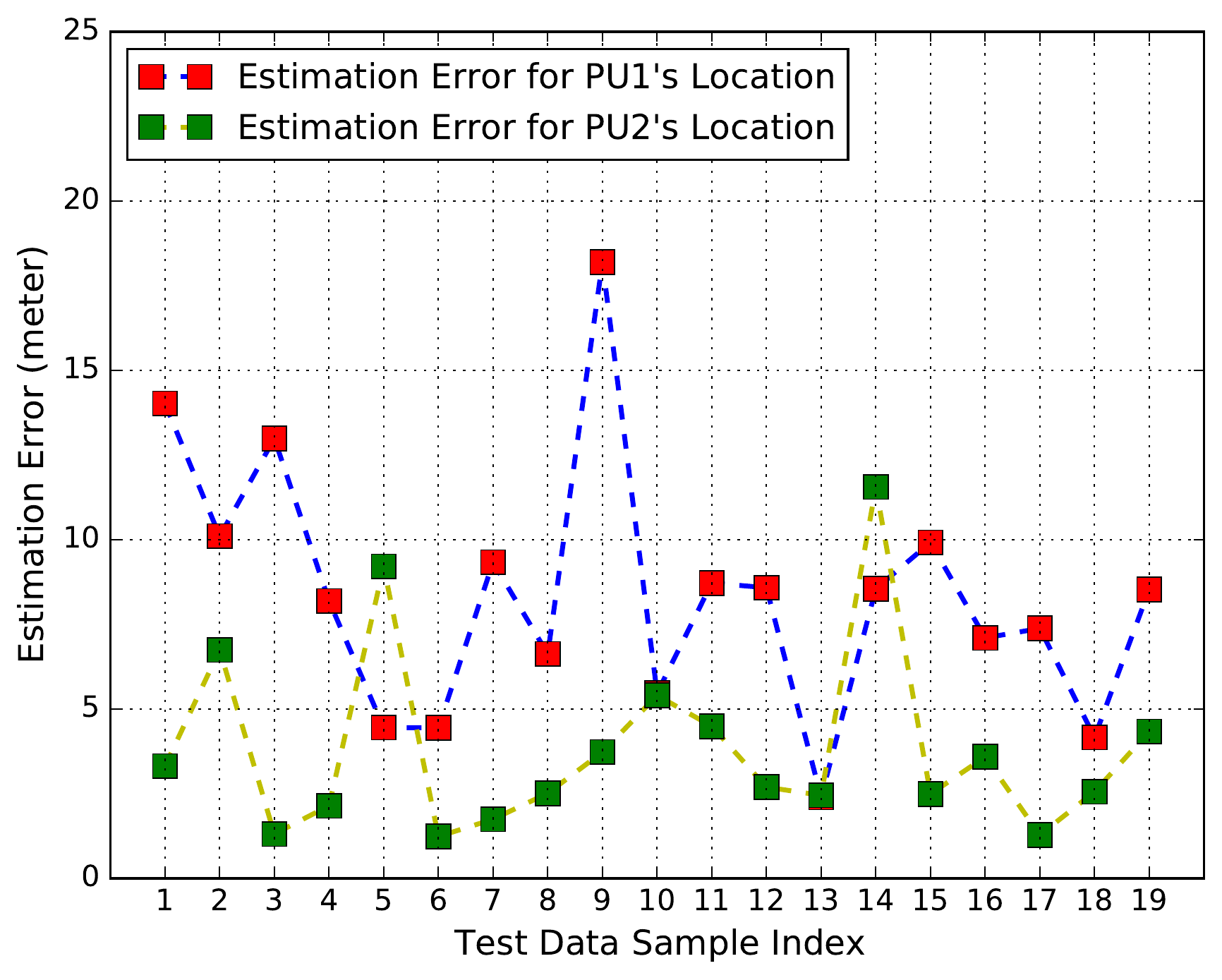}}
    	\caption{3D location detection error of the 2 PUs.}\label{LocationDeviation}\vspace{-0.5cm}
    \end{figure}

    %In Fig. \ref{ResolutionStrength1}, the estimated PUs' transmit power levels and the ground-truth labels are compared using $20$ data samples from the testing set. It can be observed that the estimation error is quite small. We also show the detection error on the PUs' 3D locations in Fig. \ref{LocationDeviation}.  The error is larger for the detection on PU 2's transmit power level, which is possibly because PU 2 is not fully ``surrounded'' by the $4$ SUs so that the feature space of PU 2's state is not fully covered by the observations at the SUs. The detection error on the 3D locations is also shown in Fig. \ref{LocationDeviation} (a), where the deviations are around $15$m and $20$m for the location detection of PU 1 and PU 2, respectively.
	
		\section{Appendix}
	\subsection{Proof of Theorem \ref{ConvergenceTheo}}\label{A}
	Given the $L$-smoothness in \eqref{smoothness}, the loss function follows
	\begin{align}\label{TaylorSmoothness}
	\mathcal{L}(\boldsymbol{\Theta}_{t+1}) \!\le\! \mathcal{L}(\boldsymbol{\Theta}_t) \!+\! (\boldsymbol{\Theta}_{t+1}\!-\!\boldsymbol{\Theta}_t)\nabla\! \mathcal{L}(\boldsymbol{\Theta}_t) \!+\! \frac{L}{2} \|\boldsymbol{\Theta}_{t+1} \!-\! \boldsymbol{\Theta}_t\|^2,
	\end{align}
	according to the Taylor expansion.
	In the same way, the $\mu$-convexity also gives a result as
	\begin{align}\label{TaylorConvexity}
	\mathcal{L}(\boldsymbol{\Theta}_{t+1}) \!\ge\! \mathcal{L}(\boldsymbol{\Theta}_t) \!+\! (\boldsymbol{\Theta}_{t+1}\!-\!\boldsymbol{\Theta}_t)\nabla \mathcal{L}(\boldsymbol{\Theta}_t) \!+\! \frac{\mu}{2} \|\boldsymbol{\Theta}_{t+1} \!-\! \boldsymbol{\Theta}_t\|^2.
	\end{align}
	By using gradient descent to update the model parameters, we have
	\begin{align}\label{step1}
	\mathcal{L}(\boldsymbol{\Theta}_{t+1}) \le& \mathcal{L}(\boldsymbol{\Theta}_t) - \eta (\nabla \mathcal{L}(\boldsymbol{\Theta}_t) - \mathbf{e} - \mathbf{o})\nabla \mathcal{L}(\boldsymbol{\Theta}_t) \nonumber\\
	&+ \frac{L\eta^2}{2} \|\nabla \mathcal{L}(\boldsymbol{\Theta}_t) - \mathbf{e}- \mathbf{o}\|^2,
	\end{align}
	following the result in \eqref{TaylorSmoothness}, where $\mathbf{e}$ is the error in gradient due to the absence of unscheduled devices in round $t$ and $\mathbf{o}$ is the error introduced by the finite training dataset or mini-batch.
	Specifically, when an active set of users $\mathcal{K}_A(t)$ is scheduled in round $t$, according to the above equation \eqref{step1}, the expected optimization function  $\mathbb{E}[\mathcal{L}(\boldsymbol{\Theta}_{t+1})]$ is bounded by
	\begin{align}\label{step2}
	&\mathbb{E}[\mathcal{L}(\boldsymbol{\Theta}_{t+1})] \nonumber\\
	\le& \mathbb{E}\bigg[\mathcal{L}(\boldsymbol{\Theta}_t) \!-\! \eta\!\!\! \sum\limits_{k \in \mathcal{K}_A(t) \cup \{0\}} \!\!\!\big\|\nabla_k \mathcal{L}(\boldsymbol{\Theta}_t)\big\|^2 \nonumber\\
	&+ \frac{L\eta^2}{2}\!\!\! \sum\limits_{k \in \mathcal{K}_A(t) \cup \{0\}}\!\!\! \big\|\nabla_k \mathcal{L}(\boldsymbol{\Theta}_t)\big\|^2\bigg] \!\!+\! \frac{L\eta^2}{2}\mathbb{E}[\|\mathbf{o}\|]^2 \nonumber\\
	=& \mathbb{E}[\mathcal{L}(\boldsymbol{\Theta}_t)] - \eta v_t (1 - \frac{L\eta}{2}) \mathbb{E} [\|\nabla \mathcal{L}(\boldsymbol{\Theta}_t)\|^2] + \frac{L\eta^2 c}{2},
	\end{align}
	where $\nabla_k \mathcal{L}(\boldsymbol{\Theta}_t) = \frac{\partial \mathcal{L}(\boldsymbol{\Theta}_t)}{\partial \boldsymbol{\theta}_k}$ is the gradient of parameters in device $k$'s local model, and $\|\nabla \mathcal{L}(\boldsymbol{\Theta}_t)\|^2 = \sum\limits_{k \in \mathcal{K} \cup \{0\}}\!\! \|\nabla_k \mathcal{L}(\boldsymbol{\Theta}_t)\|^2$ is the gradient power of the whole DNN.
	
	Moreover, by minimizing both sides of \eqref{TaylorConvexity} in terms of $\boldsymbol{\Theta}$, we have the following inequality
	\begin{align}\label{step3}
	&\min\limits_{\boldsymbol{\Theta}_{t+1}}\mathcal{L}(\boldsymbol{\Theta}_{t+1}) \nonumber\\
	\!\!\!\!\!\!\!\!\ge &\min\limits_{\boldsymbol{\Theta}_{t+1}} \mathcal{L}(\boldsymbol{\Theta}_t) \!+\! (\boldsymbol{\Theta}_{t+1}\!-\!\boldsymbol{\Theta}_t)\nabla \!\mathcal{L}(\boldsymbol{\Theta}_t) \!+\! \frac{\mu}{2} \|\boldsymbol{\Theta}_{t+1} \!-\! \boldsymbol{\Theta}_t\|^2.
	\end{align}
	Note that the RHS of \eqref{step3} is a convex function and thereby the minimizer can be straightforward obtained as $-\frac{1}{\mu} \nabla \mathcal{L}(\boldsymbol{\Theta}_t)$. Then equation \eqref{step3} can be simplified as
	\begin{align}
	\mathcal{L}(\boldsymbol{\Theta}^\ast) \ge \mathcal{L}(\boldsymbol{\Theta}_t) - \frac{1}{2\mu} \|\nabla\mathcal{L}(\boldsymbol{\Theta}_t)\|^2, 
	\end{align}
	which further gives
	\begin{align}\label{step4}
	\mathbb{E}[\|\nabla\mathcal{L}(\boldsymbol{\Theta}_t)\|^2] \ge 2\mu \mathbb{E}\left[ \mathcal{L}(\boldsymbol{\Theta}_t) - \mathcal{L}(\boldsymbol{\Theta}^\ast) \right].
	\end{align}
	By plugging \eqref{step4} into \eqref{step2}, we have
	\begin{align}
	&\mathbb{E}[\mathcal{L}(\boldsymbol{\Theta}_{t+1}) \!-\! \mathcal{L}(\boldsymbol{\Theta}^\ast)] \nonumber\\
	\le& \mathbb{E}[\mathcal{L}(\boldsymbol{\Theta}_t) \!-\! \mathcal{L}(\boldsymbol{\Theta}^\ast)] - \eta v_t (1 \!-\! \frac{L\eta}{2}) \mathbb{E}[\|\nabla \mathcal{L}(\boldsymbol{\Theta}_t)\|^2] \!+\! \frac{L\eta^2 c}{2} \nonumber\\
	\le& \left[ 1 \!-\! 2\mu \eta v_t (1 \!-\! \frac{L\eta}{2})\right]  \mathbb{E}\left[ \mathcal{L}(\boldsymbol{\Theta}_t) \!-\! \mathcal{L}(\boldsymbol{\Theta}^\ast) \right] + \frac{L\eta^2 c}{2} \nonumber\\
	\le& \left[ 1 \!-\! 2\mu \eta v (1 \!-\! \frac{L\eta}{2})\right]^{t+1}  \mathbb{E}\left[ \mathcal{L}(\boldsymbol{\Theta}_0) \!-\! \mathcal{L}(\boldsymbol{\Theta}^\ast) \right] + \frac{L\eta c}{2\mu v}, 
	\end{align}
	where $v = \min\{v_t, \ \forall v\}$ is a lower bound on the controllable variable $v_t$.
	By choosing appropriately small step-size $\eta \le \frac{1}{L}$, the convergence process can be further described as
	\begin{align}
	&\mathbb{E}[\mathcal{L}(\boldsymbol{\Theta}_{t+1}) - \mathcal{L}(\boldsymbol{\Theta}^\ast)] \nonumber\\
	\le &\left[ 1 - \frac{\mu}{L} v \right]^{t+1}  \mathbb{E}\left[ \mathcal{L}(\boldsymbol{\Theta}_0) - \mathcal{L}(\boldsymbol{\Theta}^\ast) \right]+ \frac{c}{2\mu v}.
	\end{align}

%	Considering that the data are sampled from a deterministic but unknown distribution, the above results can be extended to the underlying data distribution by taking expectation to the dataset, which gives
%	\begin{align}
%	\mathbb{E}[\mathcal{L}(\boldsymbol{\Theta}_{t+1}) - \mathcal{L}(\boldsymbol{\Theta}^\ast)] \le \left[ 1 - \frac{\mu}{L} v \right]^{t+1}  \mathbb{E}\left[ \mathcal{L}(\boldsymbol{\Theta}_0) - \mathcal{L}(\boldsymbol{\Theta}^\ast) \right].
%	\end{align}

	\bibliographystyle{ieeetr}
    \bibliography{Ref}
\end{document}